\newlist{steps}{enumerate}{1}
\setlist[steps, 1]{label = Step \arabic*:}
\newtheorem{theorem}{Theorem}[section]
\newtheorem{defn}{Definition}
\newtheorem{thm}{{\cal T}heorem}
\newtheorem{cor}{Corollary}
\newtheorem{prop}{Proposition}
\newtheorem{lem}{Lemma}
\newtheorem{conj}{Conjecture}
\newtheorem{constr}{Construction}
\newtheorem{note}{Note}
\newtheorem{example}{Example}
\newcommand{\bit}{\begin{itemize}}
	\newcommand{\eit}{\end{itemize}}
\newcommand{\bcor}{\begin{cor}}
	\newcommand{\ecor}{\end{cor}}
\newcommand{\beq}{\begin{equation}}
	\newcommand{\eeq}{\end{equation}}
\newcommand{\beqn}{\begin{equation}}
	\newcommand{\eeqn}{\end{equation}}
\newcommand{\bea}{\begin{eqnarray}}
	\newcommand{\eea}{\end{eqnarray}}
\newcommand{\bean}{\begin{eqnarray*}}
	\newcommand{\eean}{\end{eqnarray*}}
\newcommand{\ben}{\begin{enumerate}}
	\newcommand{\een}{\end{enumerate}}
\newcommand{\bdefn}{\begin{defn}}
	\newcommand{\edefn}{\end{defn}}
\newcommand{\bnote}{\begin{note}}
	\newcommand{\enote}{\end{note}}
\newcommand{\bprop}{\begin{prop}}
	\newcommand{\eprop}{\end{prop}}
\newcommand{\blem}{\begin{lem}}
	\newcommand{\elem}{\end{lem}}
\newcommand{\bthm}{\begin{thm}}
	\newcommand{\ethm}{\end{thm}}
\newcommand{\bconj}{\begin{conj}}
	\newcommand{\econj}{\end{conj}}
\newcommand{\bconstr}{\begin{constr}}
	\newcommand{\econstr}{\end{constr}}
\newcommand{\bpf}{\begin{proof}}
	\newcommand{\epf}{\end{proof}}
\newtheorem{remark}{Remark}
\begin{document}
	\title{On Existence of Latency Optimal Uncoded Storage Schemes in Geo-Distributed Data Storage Systems}
	\author{
		\IEEEauthorblockN{Srivathsa Acharya, P. Vijay Kumar \ \\}
		\IEEEauthorblockA{
			Department of Electrical Communication Engineering,\\
			IISc., Bangalore \\ 
			Email: \{srivathsaa, pvk \}@iisc.ac.in}
		\and
        \IEEEauthorblockN{Viveck R. Cadambe \ \\}
		\IEEEauthorblockA{
			Department of Electrical Engineering,\\
			Pennsylvania State University, USA \\ 
			Email: viveck@psu.edu}
		\thanks{This research is supported by SERB  Grant No.~CRG/2021/008479 and NSF Grant~\#2211045.}
    }
	\maketitle
	%% Abstract: 
	%% If your paper is eligible for the student paper award, please add
	%% the comment "THIS PAPER IS ELIGIBLE FOR THE STUDENT PAPER
	%% AWARD." as a first line in the abstract. 
	%% For the final version of the accepted paper, please do not forget
	%% to remove this comment!
	%%
	\begin{abstract}
We consider the problem of geographically distributed data storage in a network of servers (or nodes) where the nodes are connected to each other via communication links having certain round-trip times (RTTs). Each node serves a specific set of clients, where a client can request for any of the files available in the distributed system. The parent node provides the requested file if available locally; else it contacts other nodes that have the data needed to retrieve the requested file. This inter-node communication incurs a delay resulting in a certain latency in servicing the data request. The worst-case latency incurred at a servicing node and the system average latency are important performance metrics of a storage system, which depend not only on inter-node RTTs, but also on how the data is stored across the nodes. Data files could be placed in the nodes as they are, i.e., in uncoded fashion, or can be coded and placed. This paper provides the necessary and sufficient conditions for the existence of uncoded storage schemes that are optimal in terms of both per-node worst-case latency and system average latency. 
%		The idea is to first constrain the storage schemes to achieve optimal per-node worst-case latency. The uncoded storage problem under this constraint turns out to be a \textit{vertex coloring} problem using which the existence conditions are obtained. 
In addition, the paper provides efficient binary storage codes for a specific case where optimal uncoded schemes do not exist. 
\end{abstract}

\section{Introduction}
\label{sec:intro}
Distributed data storage systems are an integral part of modern cloud-computing infrastructure. Over the last decade, coding theory has played an integral role in ensuring cost-effective fault-tolerance for distributed data storage systems, for e.g., through the development of regenerating codes \cite{dimakis2010network, wu2009reducing}, locally repairable codes \cite{gopalan2012locality, tamo2014family} and codes with availability \cite{rawat2016locality} (see \cite{balaji2018erasure}, \cite{vinayaketal2022FnT} for a survey). In this paper, we study a coding formulation that is relevant for geographically distributed (or \emph{geo-distributed}) cloud storage systems  where the data is replicated primarily to enable low latency data access to clients across a wide geographic area. In fact, most major commercial cloud storage providers including Google Cloud \cite{Spanner}, Amazon AWS \cite{aws-geo}, and Microsoft Azure \cite{cosmos-geo} offer support for geo-distributed data storage.  
 
Geo-distributed cloud storage systems consist of nodes (data-centers/servers) connected to each other through links having certain round-trip delays. Each node serves a specific set of clients, where each client can request for any data available in the system. One of the desired features of geo-distributed storage systems is to provide wait-free or low-latency access to data.  Providing wait-free access requires every file to be replicated at every node, which is inefficient in terms of storage utilization, and also infeasible when the storage requirement is comparable to the total storage capacity of the system. Given that total replication  of files at nodes is not possible, several schemes based on \emph{partial replication} have been proposed in the literature, where each node stores only a subset of the data (see \cite{cadambe},\cite{cadambeArxiv} and references therein), which we refer to  as \emph{uncoded storage schemes}.

In uncoded storage schemes where nodes store only a subset of the data, clients may have to fetch data from remote nodes, and thereby incur a data access latency of the  inter-node round-trip-time (RTT)\footnote{Assuming that the system is well-provisioned,  RTTs between the nodes, which can be relatively large (tens to a few hundreds of ms,  see Sec. \ref{sec:app}) are a dominant component of user data access latency.}. In this paper, we study two latency metrics that are relevant to practice. First, we consider the worst-case latency incurred over the system - the maximum round trip time required to fetch an object from a node for a given storage scheme. The second metric is the average latency (measured across nodes and files), which determines the average throughput of the system  as per Little's law\footnote{The \emph{throughput} of a data store - the average number of client requests that can be served per second - is an important metric in data store design.} \cite{cadambeArxiv}.

Instead of storing copies of data files across nodes, one could also store functions of files (for e.g., linear combination of files) in the nodes, which we refer to as \textit{coded storage}\footnote{Coded storage is known as \emph{erasure coding} in the existing literature.}. Coded storage can be beneficial (in terms of latency) over  uncoded storage schemes in certain cases, while in others, uncoded schemes work best. Both cases are illustrated below.

\begin{example}
Consider a data storage system with 4 nodes $\{A,B,C,D\}$, each capable of storing one file. Suppose the system has to store $3$ information files $\{W_1,W_2,W_3\}$. The nodes along with inter-node RTTs are depicted in Fig.~\ref{fig:eg1}. The figure shows two possible ways of storing the information files on the nodes. In the first method, uncoded files are placed whereas the second method uses coded storage on node $D$ which stores a coded file that is bit-wise XOR of the $3$ files. 
 
\begin{figure}[t]
\centering
\includegraphics[scale=0.4]{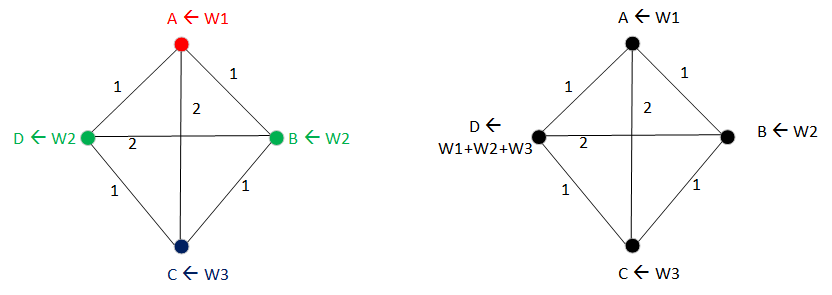}
\caption{Example 1: Data store with $4$ nodes and $3$ files with inter-node RTTs. Storage type - Left: Uncoded,  Right: Coded.}
\label{fig:eg1}
%\vspace{-1em}
\end{figure}

\begin{table}
\caption{Per-node worst-case latencies and system average latency for the coded and uncoded schemes shown in Fig.~\ref{fig:eg1}. }
\label{tab:eg1}
%\resizebox{\columnwidth}{!}{
\centering
\begin{tabular}{||c|c|c|c|c||}
	\hline
	Scheme & Node/ & Decoding &  Worst-case & Average  \\
	 &   Codeword & &  Latency & Latency \\	
	& $i (X_i)$ & & $L_{max}^{(i)}$ & $L_{avg}$\\
	\hline 
	\hline
	&$A (W_1)$& &  2 &\\
	%\hline 
	Uncoded&$B (W_2)$ & &  1 & $\frac{10}{12}$  \\
	%\hline 
	&$C (W_3)$& &  2 & \\
	%\hline 
	&$D (W_2)$& &  1 & \\
	\hline 
	& $A (W_1)$& $W_3 = X_D \oplus X_A \oplus X_B$&   1 & \\
	%\hline 
	Coded &$B (W_2)$& &   1 &  $\frac{9}{12}$\\
	%\hline 
	&$C (W_3)$& $W_1 = X_D \oplus X_C \oplus X_B$&   1 & \\
	%\hline 
	&$D (\oplus_{i=1}^3 W_i)$& $W_2 = X_D \oplus X_A \oplus X_C$&   1 & \\ 
	\hline
\end{tabular}
%}
\vspace{-1em}
\end{table}

\noindent Denote the contents of nodes as $\{X_A, X_B, X_C, X_D\}$. Table \ref{tab:eg1} shows the encoding and decoding of the 3 information files at each node. The resulting per-node worst-case latencies and system-average latency\footnote{The latency terms are formally defined in Section \ref{sec:model}.} are also provided in the table. We see  that the coded scheme outperforms the uncoded scheme in both the latency metrics. It can be further verified that the coded scheme has the least latency over all uncoded schemes.
%However, this is not always true as can be seen in the next example. 

\end{example}

\begin{example}
Consider again a 4-node 3-file system as in Example 1, but with different inter-node RTTs as given in Fig.~\ref{fig:eg2}. 
\begin{figure}[t]
\centering
\includegraphics[scale=0.4]{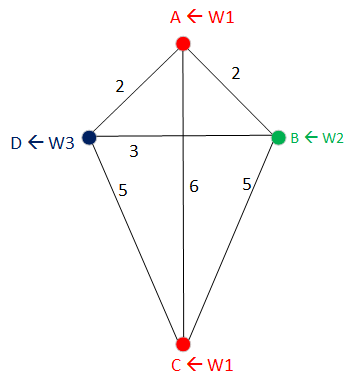}
\caption{Example 2: Data store with $n$ = 4 nodes $\{A,B,C,D\}$, $k$ =3 files $\{W_1, W_2, W_3\}$, and their inter-node RTTs. 
%The latency optimal uncoded scheme is also shown.
}
\label{fig:eg2}
\vspace{-1em}
\end{figure}
\noindent An uncoded storage scheme is also shown in the figure, with which it is possible for each node to obtain all the 3 files by contacting only its $2$ least RTT neighbors. This results in minimum per-node worst-case latency and average latency which no coded storage scheme can beat. 
 %\footnote{The optimality when using minimal least RTT nodes is given in Section~\ref{sec:model}.}, 
\end{example}

Thus, it is useful to know the class of storage systems where an uncoded scheme itself gives optimal worst-case and average latency. This is the focus of current paper.  It is notable that in computer systems and performance analysis literature, there are several works that aim to optimize data placement in geo-distributed data storage systems by utilizing knowledge of inter-node RTTs \cite{ardekani2014self, spanstore, replication-placement, shankaranarayanan2014performance, Abebe2018ECStoreBT, su2016systematic, ec-geo-placement,zare2022legostore,SoljaninEtAl}.  These works develop optimization frameworks and solutions for data/codeword placement based on latency, communication cost, storage budget, and fault-tolerance requirements. However, even for the simpler objective of minimizing average and worst-case latencies, the best strategies are not known. In particular, for a given storage budget and worst-case latency, it is unclear when uncoded strategies obtain optimal average latency, or how erasure codes should be designed to minimize average latency. There are also works which provide latency analsysis based on MDS storage (as in \cite{SoljaninEtAl},\cite{lee_shah_huang_ramachandran_2017}), but as will be shown later, MDS codes are not suited well for average latency constraints. Notably, different from classical erasure codes, the erasure codes must be designed and codeword symbols must be placed on the nodes based on the RTTs to minimize latency. This paper makes progress on these problems.

The rest of the paper is organized as follows. Section \ref{sec:model} develops the system-model, provides formal definitions of storage codes and associated latencies. Section \ref{sec_main_result} gives the main contribution of the paper where the problem of optimal uncoded storage is converted to one of \emph{vertex coloring} on a special subgraph  called the \emph{nearest-neighbor graph}, and provides necessary and sufficient condition for an optimal uncoded scheme to exist. Section \ref{sec:coded storage} provides coded storage schemes for some specific cases where optimal uncoded schemes do not exist.  Section \ref{sec:app} gives an application of the main result on a hypothetical geo-distributed data-center network. The paper concludes with possible research directions for future in Section\ref{sec:conc}.

\section{System Model and Problem formulation}
\label{sec:model}
We model the data storage network of $n$ servers and $k \le n$
files by an undirected weighted complete\footnote{A complete graph is one where an edge exists between every pair of nodes.} 
graph $\mathcal{G}=(\mathcal{N}, T)$, where $\mathcal{N} = \{1,2,\dots, n\}$ denote the $n$ nodes, and $T = \{\tau(i,j): 1\le i,j \le n\}$ is the edge-weight matrix, representing the RTTs between a pair of nodes. RTT is same in either direction, i.e., $\tau(i,j) = \tau(j,i) \quad \forall (i,j) \in \mathcal{N} \times \mathcal{N}$. Also $\tau(i,i) = 0 \quad \forall i \in \mathcal{N}$. 
Let $W_1,\dots W_k$ denote the $k$ information files (each of unit file-size) to be stored in the storage network. Each node has capacity to store data worth $1$ file-size \footnote{In a general setting, each of the $n$ servers can store $M \ge 1$ files, and the requirement is to store $kM (\le nM)$ information files in the network. The paper addresses $M=1$ case. The storage codes thus obtained can be extended to $M >1$ case by partitioning each of the node contents into $M$ \emph{stripes} and then applying the code on each stripe.}.
 We denote $[k] = \{1,2,\dots,k\}$ to represent the file index set. Let $X_1,\dots X_n$ denote the data stored in each of the $n$ nodes. 

Note that the point-to-point single-hop model above can also be applied to a more general multi-hop scenario, where the communication between two nodes $(i,j)$ traverses intermediate nodes. In this case, the sum total of RTTs of the links along the least RTT path between the nodes is taken as the equivalent edge weight $\tau(i,j)$ in our model.

A linear storage code with sub-packetization $\alpha$ can be defined as follows. Assume that each information file $W_j; j \in [k]$ can be split into $\alpha$ sub-packets $(W_{j1}, W_{j2},\dots, W_{j\alpha})$, with each sub-packet belonging to a finite field $\mathcal{F}$. Similarly, a file stored in node $i \in \mathcal{N}$, $X_i$, is composed of $\alpha$ sub-packets  $(X_{i1}, X_{i2},\dots, X_{i\alpha})$, also belonging to $\mathcal{F}$.
Denote 
$\\
\underbar{X} = (X_{11},X_{12}, \dots ,X_{1\alpha}, \dots, X_{n1}, X_{n2},\dots, X_{n\alpha})^T \\ 
\underbar{W} = (W_{11},W_{12}, \dots ,W_{1\alpha}, \dots, W_{k1}, X_{k2},\dots ,W_{k\alpha})^T$.

\begin{defn}[Linear storage code]
A linear storage code $\mathcal{C}$ on $\mathcal{G}$ is defined by a $(k\alpha \times n\alpha)$ generator matrix $G$ such that 
\begin{equation}
\label{eq_coding_G}
\underbar{X}^T = \underbar{W}^T G
\end{equation}
\end{defn}

The column $j$ in  $G$ comprises the linear weights of each of the $k\alpha$ sub-packets of $\underbar{W}$ that combine to make the $j$th coded sub-packet in $\underbar{X}$. We only consider the codes with $rank(G) = \alpha k$, the condition necessary for decoding all information files from the coded files.

An \emph{uncoded storage scheme} is a special case of linear storage codes where there is no sub-packetization ($\alpha=1$) and no coding across the files. 
%In terms of \eqref{eq_coding_G}, $\alpha=1$ and every column of $G$ is an elementary basis vector. 
%That is, an \emph{Uncoded Storage Scheme} is a map $\sigma:\mathcal{N}\to [k]$ where $\sigma(i)$ denotes the index of the file placed in node $i$, i.e., $X_i = W_{\sigma(i)}$. 

\subsection{Average Latency and Per-node Worst-case Latency}  
Given a code $\mathcal{C}$ on $\mathcal{G}$, the \emph{decoding process} and associated \emph{latencies} are formally defined in Appendix~\ref{recovery} of the extended paper~\cite{acharyaArxiv}. Here, we provide an intuitive definition of the latency as follows. For a certain wait-time $L$, a node $i$ has access to the contents of those nodes $t$ whose RTT satisfies $\tau(t,i) \le L$. Using the contents from these nodes, certain raw files can be decoded.
\emph{Latency} at node $i$ to decode a file $W_j$, denoted as $l_{j}^{(i)}$, is defined as the minimum wait-time $L$ at node $i$  needed to decode file $W_j$\footnote{Strictly speaking, as users/clients request files from nodes, the latency should also include the delay between a user and its local node. But this delay can be neglected since it is dominated by inter-node RTT (see \cite{spanstore},\cite{zare2022legostore}), and since the delay remains same for any choice of the storage code, thus not affecting the storage optimization.}.
%Given an admissible code $\mathcal{C}$ on $\mathcal{G}$, let $l_1^{(i)},l_2^{(i)},\dots, l_k^{(i)}$ denote the latencies incurred at node $i$ to recover respectively the $k$ files. 

\emph{Per-node worst-case latency} of code $\mathcal{C}$ at node $i$ is defined as 
\begin{equation}
\label{def_wc_lat}
L_{max}^{(i)}(\mathcal{C}) = \max_{j \in [k]} l_j^{(i)}
\end{equation}

\emph{Average latency} of code $\mathcal{C}$ is defined as
\begin{equation}
\label{def_avg_lat}
L_{avg}(\mathcal{C}) = \frac{1}{kn}\sum_{i \in \mathcal{N}}\sum_{j \in [k]}l_{j}^{(i)}
\end{equation}
Given a node $i$, let $ \big( \lambda_0^{(i)} \le \lambda_1^{(i)} \le \dots \le \lambda_{(n-1)}^{(i)} \big)$ be the sorted list of RTTs to node $i$, i.e., $\big (\tau(j,i): j \in \mathcal{N}\big )$, in ascending order. That is, $\lambda_{m}^{(i)}$ is the $m^{\text{th}}$ least RTT value to node $i$ from other nodes.  By definition, $\lambda_0^{(i)}= \tau(i,i) =0$.

\begin{prop}
\label{prop:wc_bound}
For any code $\mathcal{C}$ on $\mathcal{G}$, per-node worst-case latency at any node $i$ is lower-bounded as:
\begin{equation}
\label{eq_wc_bound}
L_{max}^{(i)}(\mathcal{C}) \ge \lambda_{(k-1)}^{(i)}
\end{equation}

Further, the average latency $L_{avg}(\mathcal{C})$ is lower-bounded as:  
\begin{equation}
	\label{eq_avg_bound}
	L_{avg}(\mathcal{C}) \ge \frac{1}{kn}\sum_{i \in \mathcal{N}}\sum_{j\in [k]}\lambda_{j}^{(i)}
\end{equation}

\end{prop}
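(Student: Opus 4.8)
\medskip
\noindent\textbf{Proof proposal.}
The plan is to derive both inequalities from one elementary observation about linear codes: \emph{if, at wait-time $L$, node $i$ can decode some $m$ of the $k$ files, then at least $m$ nodes have RTT at most $L$ to node $i$.} I would prove this directly from the generator matrix $G$ of~(\ref{eq_coding_G}). Fix $i$ and $L$ and let $S$ be the set of nodes $t$ with $\tau(t,i)\le L$; the sub-packets accessible to node $i$ are precisely the entries of $G_S^{\,T}\underline{W}$, where $G_S$ is the $(k\alpha\times|S|\alpha)$ submatrix of $G$ formed by the columns indexing the sub-packets of the nodes in $S$. Decoding a file $W_j$ means recovering all $\alpha$ of its sub-packets, i.e.\ a fixed block of $\alpha$ coordinates of $\underline{W}$; hence recovering $m$ distinct files (for every $\underline{W}$) is possible only if the corresponding $m\alpha$ standard basis vectors of $\mathcal{F}^{k\alpha}$ all lie in the column space of $G_S$. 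Those $m\alpha$ vectors are linearly independent and the column space of $G_S$ has dimension at most $|S|\alpha$, so $|S|\ge m$, which is the claim. (The standing assumption $\mathrm{rank}(G)=k\alpha$ is not needed here; it only guarantees that every file is eventually decodable, so that each $l_j^{(i)}$ is finite and in fact equals one of $\lambda_0^{(i)},\dots,\lambda_{(n-1)}^{(i)}$.)

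Given the observation, (\ref{eq_wc_bound}) is immediate: at $L=L_{max}^{(i)}(\mathcal{C})$ all $k$ files are decodable at node $i$, so at least $k$ nodes --- counting node $i$ itself, which has RTT $\lambda_0^{(i)}=0$ to itself --- lie within RTT $L$ of $i$; since the $k$-th smallest RTT to $i$ is $\lambda_{(k-1)}^{(i)}$, this forces $L_{max}^{(i)}(\mathcal{C})\ge\lambda_{(k-1)}^{(i)}$. For the average-latency bound I would fix $i$, list its per-file latencies in nondecreasing order as $l_{\pi(1)}^{(i)}\le l_{\pi(2)}^{(i)}\le\cdots\le l_{\pi(k)}^{(i)}$ for a suitable permutation $\pi$ of $[k]$, and apply the observation at wait-time $L=l_{\pi(m)}^{(i)}$: the $m$ files $W_{\pi(1)},\dots,W_{\pi(m)}$ are then all decodable, so at least $m$ nodes lie within RTT $L$ of $i$, giving $l_{\pi(m)}^{(i)}\ge\lambda_{(m-1)}^{(i)}$. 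Summing over $m\in[k]$ yields $\sum_{j\in[k]}l_j^{(i)}\ge\sum_{m=0}^{k-1}\lambda_m^{(i)}$, and averaging over $i\in\mathcal{N}$ and dividing by $kn$ gives (\ref{eq_avg_bound}). (I expect the $\lambda$-indices on the right-hand side of (\ref{eq_avg_bound}) to be read as $0,1,\dots,k-1$, consistent with (\ref{eq_wc_bound}); since $\lambda_0^{(i)}=0$, the $m=0$ term contributes nothing.)

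There is no deep obstacle here. The step that needs care is the linear-algebra argument in the observation --- in particular handling the sub-packetization cleanly: a file is recovered ``all or nothing'', so $m$ recovered files consume $m\alpha$ independent directions, not fewer, which is exactly what turns ``$m$ files'' into ``$m$ nodes''. One should also record that decodability at node $i$ is monotone in $L$ and changes only at the finitely many thresholds $\lambda_0^{(i)},\dots,\lambda_{(n-1)}^{(i)}$, so that the minima defining the $l_j^{(i)}$ are attained and the sorting permutation $\pi$ exists. The two ``lifting'' steps --- to the worst case, and to the average via sorting the per-file latencies --- are then routine.
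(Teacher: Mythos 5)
Your proposal is correct and follows essentially the same route as the paper's Appendix~\ref{app:wc_bound} proof: both rest on the observation that decoding $m$ files at node $i$ requires at least $m$ nodes within the relevant RTT, applied once for the worst case ($m=k$) and once per threshold $l_{\pi(m)}^{(i)}$ after sorting the per-file latencies. The only substantive difference is that the paper states the counting step (``these $r$ nodes can help decode $m$ files, so $r\ge m$'') as self-evident, whereas you supply the underlying linear-algebra justification via the column space of $G_S$; you also correctly flag the off-by-one indexing in~\eqref{eq_avg_bound}, which the paper's own proof confirms should be $\lambda_0^{(i)},\dots,\lambda_{(k-1)}^{(i)}$.
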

\begin{proof}
See Appendix~\ref{app:wc_bound} of the extended paper~\cite{acharyaArxiv}.
\end{proof}

\begin{defn}
Given a directed subgraph $\mathcal{D}$ of $\mathcal{G}$ with the same node set, a code $\mathcal{C}$ is said to be \emph{admissible} on $\mathcal{D}$ if the decoding of any file at any node involves file transfers only along the directed edges of $\mathcal{D}$ .  
\end{defn}

\begin{comment}
Given a code $\mathcal{C}$, the recovery graph $\mathcal{D}(\mathcal{C})$ is a directed subgraph
%\footnote{The storage graph $\mathcal{G}$ can also be viewed as a directed graph by converting each edge to a bidirectional edge.} 
of $\mathcal{G}$ with node set $\mathcal{N}$, and edge set consisting of incoming edges to each node from all its helper nodes. Helper nodes of a node $i$ are those nodes with which node $i$ communicates in order to recover any of the information files.
A storage code is said to be \emph{admissible} on a directed subgraph $H$ of $\mathcal{G}$ if, at each node, any of the $k$ files can be recovered locally and/or by accessing data from nodes connected to it via incoming edges in $H$. In other words, a storage code $\mathcal{C}$ is admissible on a subgraph $H$ if  if $\mathcal{D}(\mathcal{C}) \subseteq H$.
\end{comment}

\section{Main Result}
\label{sec_main_result}
In this paper, we consider only the codes meeting the worst-case latency optimality constraint \eqref{eq_wc_bound}. Satisfying this constraint are the codes admissible on a special subgraph called the \emph{nearest-neighbor graph}.
%\subsection{Nearest-Neighbor Graph}
\bdefn
A \emph{nearest-neighbor graph} $\mathcal{G}_{k-1}$ is defined as a directed subgraph of $\mathcal{G}$ where each node $i$ has incoming edges from $(k-1)$ other nodes having $(k-1)$ least RTT values to node $i$. 
\edefn

\begin{remark} \label{rem:nng}
 \
\begin{itemize}[leftmargin =*]
\item The incoming edges to a node $i$ in $\mathcal{G}_{k-1}$ have, in ascending order, the weights $\lambda_1^{(i)}, \dots , \lambda_{(k-1)}^{(i)}$.
\item In this paper, we refer to \textbf{neighbors} of a node $i$ in $\mathcal{G}_{k-1}$  as only the $(k-1)$ nodes from which there are incoming edges to node $i$. But, a node connected only via an outgoing edge from node $i$ is \textbf{not} referred as its \emph{neighbor} in $\mathcal{G}_{k-1}$.
\item Multiple $\mathcal{G}_{k-1}$ are possible when multiple nodes share the same RTT value of $\lambda_{(k-1)}^{(i)}$ to a node $i$. 
%However, if the weights of incoming edges to each node are distinct, then there is a unique $\mathcal{G}_{k-1}$.
\end{itemize}
    
\end{remark}

\noindent $\mathcal{G}_{k-1}$ for Examples 1 and 2 of Section \ref{sec:intro} are shown in Fig.~\ref{fig:egBng}.

\begin{figure}[t]
\centering
\begin{subfigure}{0.45\columnwidth}
\centering
\includegraphics[scale = 0.4]{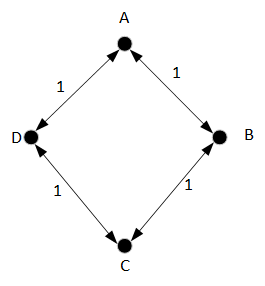}
\caption{Example 1}
\end{subfigure}
\begin{subfigure}{0.45\columnwidth}
\centering
\includegraphics[scale = 0.4]{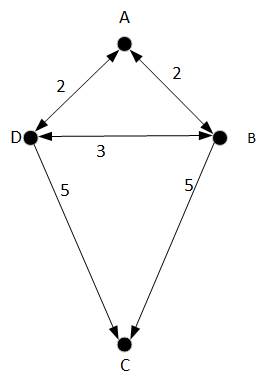}
\caption{Example 2}
\end{subfigure}
\caption{Nearest-neighbor graphs $\mathcal{G}_{2}$ for Section \ref{sec:intro} examples.}

\label{fig:egBng}
\vspace{-1em}
\end{figure}

\noindent Note that there always exist codes that are admissible on $\mathcal{G}_{k-1}$, such as the MDS (Maximum Distance Separable) codes given below.
\begin{example}  [Scalar MDS Codes]
\noindent Let $\alpha = 1$(no sub-packetization) \footnote{\emph{Storage code without sub-packetization} is known as \emph{Cross-object erasure coding} in \cite{cadambe}. MDS codes with sub-packetization also exist (see \cite{cadambeArxiv}).} and $G$ be a $k\times n$ matrix of a MDS code. From $k$ information files, let $n$ coded files be generated with this MDS matrix $G$ as in \eqref{eq_coding_G}, and place one coded file at each node. Due to MDS property, any information file can be recovered at a given node if there are $k$ coded files, which can be obtained by using the local data at the node and by contacting $(k-1)$ least RTT nodes. Hence this code is admissible on a $\mathcal{G}_{k-1}$.
    
\end{example}

\begin{comment}
\noindent \emph{MDS codes with sub-packetization}\footnote{\emph{Storage code with sub-packetization} is called as \emph{Intra-object erasure coding} in existing literature\cite{cadambe}.}: Let $\alpha = k$ and $G_{MDS}$ be a $k \times n$ MDS generator matrix. For each information file $W_j$, compute $n$ coded sub-packets from $k$ sub-packets of $W_j$ using $G_{MDS}$ as given in the equation below,  and place one coded sub-packet in each of the $n$ nodes.
\begin{equation}
[X_{1j},X_{2j}, \dots X_{nj} ] = [W_{j1},W_{j2}, \dots ,W_{j\alpha}]G_{MDS}
\end{equation}

Now, any file can be recovered at a node if it has at least $k$ coded sub-packets, which can be obtained by contacting $(k-1)$ least RTT nodes.
\end{comment}
% \end{enumerate} 
% As these schemes use only $(k-1)$ least RTT nodes for decoding, they are admissible codes on corresponding $\mathcal{G}_{k-1}$.

\noindent The reason for looking into admissible codes on $\mathcal{G}_{k-1}$ is because of their latency optimality properties as shown next.

\bprop
\label{prop:wc_bng}
Any admissible storage scheme $\mathcal{C}$ on  $\mathcal{G}_{k-1}$ meets the per-node worst-case latency bound in \eqref{eq_wc_bound}, i.e,
\begin{equation}
L_{max}^{(i)}(\mathcal{C}) = \lambda_{(k-1)}^{(i)}
\end{equation}
  
\eprop
\begin{proof}
A given node $i$ can only use the links available in $\mathcal{G}_{k-1}$, and the maximum weight of its incoming edges is $\lambda_{(k-1)}^{(i)}$. 
\end{proof}

Even though any admissible code (such as MDS code) on $\mathcal{G}_{k-1}$ is worst-case latency optimal, it need not be average-latency optimal.
%MDS codes described earlier are therefore worst-case latency optimal codes, though not necessarily average-latency optimal.
However, if an \textbf{uncoded} admissible code exists on a $\mathcal{G}_{k-1}$, it is both worst-case latency and average-latency optimal as given below.
 
\bprop
\label{prop: uncoded_bng}
If there is an admissible \textbf{uncoded} scheme $\mathcal{C}$ on $\mathcal{G}_{k-1}$, then it meets the average latency lower bound in  \eqref{eq_avg_bound}: 
\begin{equation} \label{eq_opt_uncoded}
L_{avg}(\mathcal{C}) = \frac{1}{kn}\sum_{i \in \mathcal{N}}\sum_{j\in [k]}\lambda_{j}^{(i)}
\end{equation}
Conversely, an optimal uncoded scheme in the original complete graph $\mathcal{G}$ that meets the latency bounds of \eqref{eq_wc_bound} and  \eqref{eq_avg_bound} exists only if it is admissible on \emph{some} $\mathcal{G}_{k-1}$ of $\mathcal{G}$.
\eprop
\begin{proof}
See Appendix~\ref{app:uncoded_bng} of the extended paper~\cite{acharyaArxiv} for proof of \eqref{eq_opt_uncoded}. The converse can be seen from the fact that every node is forced to communicate with precisely $(k-1)$ least latency neighbors in order to meet both the latency bounds.
\end{proof}

%Thus, an admissible uncoded storage scheme on a $\mathcal{G}_{k-1}$ is both worst-case latency and average latency optimal. 

\begin{comment}
Next section provides necessary and sufficient conditions for such an uncoded storage scheme to exist on $\mathcal{G}_{k-1}$.
As explained in the previous section,     
\end{comment}

We thus look for existence of admissible uncoded storage schemes on $\mathcal{G}_{k-1}$. For this, we first convert the problem into that of vertex coloring\cite{diestel} on a related undirected graph called the \emph{extended graph}. 

\bdefn
Given a nearest-neighbor graph $\mathcal{G}_{k-1}$, its extended graph $\mathcal{H}$ is defined as an undirected graph on same node set formed by the following rules. 
\begin{itemize}[leftmargin=*]
\item If 2 nodes are connected by a  (directed) edge in $\mathcal{G}_{k-1}$, connect them by an (undirected) edge in $\mathcal{H}$.
\item If 2 nodes are neighbors\footnote{see Remark~\ref{rem:nng} for definition of neighbor in $\mathcal{G}_{k-1}$} of same node in $\mathcal{G}_{k-1}$, then also connect them by an edge in $\mathcal{H}$.
\end{itemize}
\edefn
\noindent Fig.~\ref{fig:egH} shows the extended graphs for the examples of  Section~\ref{sec:intro}.
\begin{figure}[t]
\centering
\begin{subfigure}[b]{0.45\columnwidth}
\centering
\includegraphics[scale = 0.5]{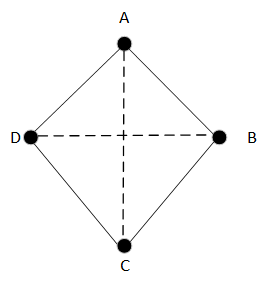}
\caption{Example 1}
\end{subfigure}
\begin{subfigure}[b]{0.45\columnwidth}
\centering
\includegraphics[scale = 0.5]{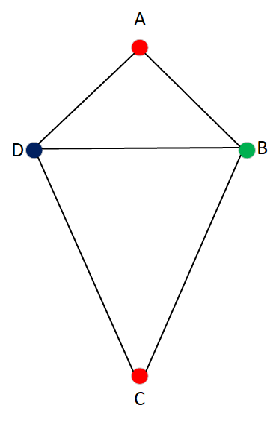}
\caption{Example 2}
\end{subfigure}
\caption{Extended graphs for the examples in Section \ref{sec:intro}.The dashed edges are those added on top of $\mathcal{G}_{2}$.}
\label{fig:egH}
\vspace{-1em}
\end{figure}

% Some relevant definitions related to vertex coloring are given first .
% \bdefn[Vertex Coloring and Chromatic Number]
% \begin{itemize}[leftmargin=*]
% \item 
Vertex coloring of a graph $\mathcal{G} = (\mathcal{N},\mathcal{E})$ is a map $\rho: \mathcal{N} \to S$ such that $\rho(v) \ne \rho(w)$ whenever $v$ and $w$ are adjacent. The elements of set $S$ are called the \emph{colors}. 
% \item
The smallest size of set $S$ with which $G$ can be vertex colored is known as its \emph{chromatic number} denoted by $\chi(\mathcal{G})$. 
%\end{itemize}
% \edefn
%One useful theorem providing necessary and sufficient conditions for a graph to have $m$-coloring is described in Appendix. 
One result that we use is that if $\mathcal{G}$ has a complete subgraph of $m$ nodes, then $\chi(\mathcal{G}) \ge m$.

\begin{theorem}[Vertex Coloring]
\label{thm:vc}
An admissible uncoded storage scheme exists on $\mathcal{G}_{k-1}$ if and only if the corresponding extended graph $\mathcal{H}$ has chromatic number $\chi(\mathcal{H})= k$ . 
\end{theorem}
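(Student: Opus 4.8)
The claim is an ``if and only if'' about the equivalence between admissible uncoded schemes on $\mathcal{G}_{k-1}$ and proper $k$-colorings of $\mathcal{H}$. My plan is to set up a correspondence between a valid uncoded placement $X_i = W_{\rho(i)}$ and the coloring map $\rho : \mathcal{N} \to [k]$, and then check that the two structural constraints defining $\mathcal{H}$ are exactly the constraints that make $\rho$ proper, together with the surjectivity requirement that forces $\chi(\mathcal{H}) = k$ rather than $\chi(\mathcal{H}) \le k$.

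First I would establish the forward direction. Suppose an admissible uncoded scheme $\mathcal{C}$ exists on $\mathcal{G}_{k-1}$; since there is no sub-packetization and no mixing of files, each node $i$ stores exactly one file, say $W_{\rho(i)}$, which defines a map $\rho : \mathcal{N} \to [k]$. Admissibility means that every node $i$ can decode all of $W_1, \dots, W_k$ using only its own content and that of its $(k-1)$ neighbors in $\mathcal{G}_{k-1}$; since the contents are raw files, this forces the $k$ files $\{W_{\rho(i)}\} \cup \{W_{\rho(t)} : t \text{ a neighbor of } i\}$ to be all of $W_1, \dots, W_k$, i.e.\ these $k$ labels are \emph{distinct}. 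This immediately yields $\rho(i) \ne \rho(t)$ for each neighbor $t$ of $i$ (the first edge rule of $\mathcal{H}$), and $\rho(t) \ne \rho(t')$ for any two neighbors $t, t'$ of the same node $i$ (the second edge rule of $\mathcal{H}$). Hence $\rho$ is a proper coloring of $\mathcal{H}$ with at most $k$ colors. To see that exactly $k$ colors are used: since every file must appear among the $k$ labels in every such closed neighborhood, every file $W_j$ is stored somewhere, so $\rho$ is surjective onto $[k]$, giving $\chi(\mathcal{H}) \le k$. For the matching lower bound $\chi(\mathcal{H}) \ge k$, I would exhibit a $k$-clique in $\mathcal{H}$: pick any node $i$ together with its $(k-1)$ neighbors in $\mathcal{G}_{k-1}$; node $i$ is joined to each neighbor by rule one, and the neighbors are pairwise joined by rule two, so this set of $k$ vertices is a complete subgraph of $\mathcal{H}$, whence $\chi(\mathcal{H}) \ge k$. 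Combining, $\chi(\mathcal{H}) = k$.

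For the converse, suppose $\chi(\mathcal{H}) = k$ and fix a proper coloring $\rho : \mathcal{N} \to [k]$ using all $k$ colors. Define the uncoded scheme by storing $X_i := W_{\rho(i)}$ at node $i$. I must check that at each node $i$ the closed neighborhood $\{i\} \cup \{$neighbors of $i$ in $\mathcal{G}_{k-1}\}$ carries all $k$ distinct files. This set has exactly $k$ vertices, and by the two edge rules of $\mathcal{H}$ it induces a complete subgraph of $\mathcal{H}$ (same clique argument as above); since $\rho$ is proper, the $k$ vertices of a clique receive $k$ pairwise-distinct colors, hence the closed neighborhood of $i$ stores all $k$ files. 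Therefore node $i$ can decode every $W_j$ using only edges of $\mathcal{G}_{k-1}$, so $\mathcal{C}$ is admissible on $\mathcal{G}_{k-1}$, completing the equivalence.

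The only mild subtlety — and the place I would be most careful — is the passage from ``admissibility of an uncoded code'' to ``the closed neighborhood of each node contains all $k$ files as distinct raw symbols.'' This uses that uncoded means $\alpha = 1$ with no linear combining, so the only way node $i$ can ``recover'' $W_j$ is if some node in its closed $\mathcal{G}_{k-1}$-neighborhood literally stores $W_j$; there is no possibility of, say, reconstructing a file from partial information. I would state this explicitly (it follows from the definition of uncoded storage scheme and the decoding model), after which everything else is a routine translation between the combinatorial clique/neighborhood structure of $\mathcal{G}_{k-1}$, the edge set of $\mathcal{H}$, and properness of $\rho$. A final remark worth including: the ``$=k$'' (as opposed to ``$\le k$'') in the theorem is not an extra hypothesis but automatic, since $\mathcal{H}$ always contains a $k$-clique; one could equivalently phrase the condition as ``$\mathcal{H}$ is $k$-colorable.''
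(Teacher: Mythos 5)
Your proof is correct and follows essentially the same route as the paper's: identify files with colors via $X_i = W_{\rho(i)}$, observe that admissibility of an uncoded scheme forces the closed $\mathcal{G}_{k-1}$-neighborhood of every node to carry $k$ distinct raw files, and note that the two edge rules defining $\mathcal{H}$ are precisely the constraints that make $\rho$ a proper coloring. One point where you go further than the paper is the explicit verification that $\chi(\mathcal{H}) \ge k$ always holds (any node together with its $k-1$ in-neighbors forms a $k$-clique in $\mathcal{H}$). The paper's proof only establishes $k$-colorability in the converse direction and implicitly takes the lower bound for granted, so your clique observation is a worthwhile addition: it makes precise why the theorem can be stated with ``$\chi(\mathcal{H}) = k$'' rather than ``$\chi(\mathcal{H}) \le k$,'' and it justifies the ``exactly $k$'' count in the closed neighborhood used in both directions. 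Your flagged subtlety — that an uncoded node cannot reconstruct a file it does not literally receive — is also the right place to be careful, and matches the assumption the paper relies on implicitly.
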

\begin{proof} 
    By associating each file with distinct color, result is obtained. A detailed proof is in Appendix~\ref{app:vc} of the extended paper~\cite{acharyaArxiv}.
% Suppose $\chi(\mathcal{H}) = k$. Then a $k$-coloring scheme on $\mathcal{H}$ implies that in $\mathcal{G}_{k-1}$, each node and its $(k-1)$ neighbors have all distinct colors. Replacing $k$ colors with $k$ files gives an admissible uncoded scheme as each node has access to $k$ distinct files from itself and its  $(k-1)$ neighbors.
% Conversely, if an admissible uncoded storage scheme exists on  $\mathcal{G}_{k-1}$, each node $i$ needs to get access to $k$ files either locally, or from its neighbors. Since there are only $(k-1)$ neighbors for each node in $\mathcal{G}_{k-1}$, the files in the node $i$ and its neighbor set should all be distinct. Thus, in extended graph $\mathcal{H}$, any 2 neighbors store different files since, in $\mathcal{G}_{k-1}$, they are either neighbors or belong to the neighbor set of a same node. The files therefore form a $k$-coloring on $\mathcal{H}$.
\end{proof}

\noindent Thus, the theorem along with Proposition~\ref{prop: uncoded_bng} implies that a latency optimal uncoded scheme exists on the original graph $\mathcal{G}$ if and only if $\mathcal{H}$ of some $\mathcal{G}_{k-1}$ is $k$-colorable. Let us apply this result to $(n,k) = (4,3)$ systems of Section~\ref{sec:intro}.
% whose extended graphs are in Fig.~\ref{fig:egH}.
\begin{itemize}[leftmargin =*]
\item
\noindent \textbf{Example 1:} Extended graph is itself a complete graph of $4$ nodes, and hence it needs $4 (>k)$ colors. So, from Theorem \ref{thm:vc}, no latency optimal uncoded scheme exists, which reinforces the observation in Section \ref{sec:intro}.
\item
\noindent \textbf{Example 2:} By assigning same color to the non-adjacent nodes $A$ and $C$ (in $\mathcal{H}$), $k=3$ coloring is possible as shown in Fig.~\ref{fig:egH} . Thus, an optimal uncoded scheme exists as shown in Section \ref{sec:intro}. \\
\end{itemize}
\vspace{-1em}
\noindent One consequence of the theorem is for special case of $k=2$.
\bcor
\label{corr:k_2}
For any data storage system $\mathcal{G}$ with $k=2$, there always exists an optimal uncoded scheme that meets the latency bounds of \eqref{eq_wc_bound} and \eqref{eq_avg_bound}.
\ecor
\begin{proof}
 See Appendix~\ref{app:k_2} of the extended paper~\cite{acharyaArxiv}.    
\end{proof}
%\begin{proof}
%
%We first claim that there exists a nearest-neighbor graph $\mathcal{G}_1$ without any loops.\\
%Note that $\mathcal{G}_1$ is obtained by connecting each node to its least latency neighbor. Suppose there is a directed loop  of nodes $(v_1 \to v_2 \to, ...., v_r\to v_1)$, then the corresponding edge weights satisfy \\
%$\tau_{v_1, v_2} \le \tau_{v_2,v_3} \le \dots \le \tau_{v_{(r-1)}, v_r}\le \tau_{v_r, v_1}$\\
%This implies that $\tau_{v_1, v_2} \le \tau_{v_r, v_1}$.
%Since $v_r$ is the least-latency neighbor of $v_1$, it must be that  $\tau_{v_2, v_1} = \tau_{v_r, v_1}$. Thus all the edges in the loop have same weight. So we can break the loop by assigning $v_2$ as the nearest neighbor of $v_1$ instead of $v_r$. In this manner, all loops can be removed to obtain a loop-less $\mathcal{G}_1$.
%
%Now, for $k=2$, note that $\mathcal{G}_1$ and its extended graph $\mathcal{H}$ are the same, and so $\mathcal{H}$ is loop-free. Since a loop-free graph is a tree or a forest which can always be $2$-colored, $\mathcal{H}$ has $(k=2)$ coloring.  Hence, by Theorem \ref{thm:vc}, optimal uncoded schemes exist.
% 
%\end{proof}

\section{Coded Storage Schemes on $\mathcal{G}_{k-1}$}
\label{sec:coded storage}
%By modeling a distributed storage network by the tuple $(n,k, T = \{\tau_{i,j}: 1\le i,j \le n\})$, we focussed on the codes admissible on its nearest-neighbor graph $\mathcal{G}_{k-1}$ as they are per-node worst-case latency optimal. Further, in those networks if the coloring theorem \ref{thm:vc} holds, there exists an uncoded admissible scheme which is also average-latency optimal. 

\noindent We next look at networks where Theorem \ref{thm:vc} does not hold on any nearest-neighbor graph $\mathcal{G}_{k-1}$. As no optimal uncoded scheme exists, we need to look for coded schemes on $\mathcal{G}_{k-1}$ that are average latency optimal. This is an open problem. However, as a byproduct of Theorem \ref{thm:vc}, we provide a family of admissible binary codes on $\mathcal{G}_{k-1}$ for the special case of $\chi(\mathcal{H}) = (k+1)$ as described below.

% \subsection{ Case: Admissible Binary Codes on $\mathcal{G}_{k-1}$}

% ($\chi(\mathcal{H}) = k$) ensures a $k$-coloring of $n$-nodes which in turn provides an optimal uncoded scheme as per Theorem \ref{thm:vc}. 
Consider a data storage network where an extended graph has $\chi(\mathcal{H}) = (k+1)$, i.e., $\mathcal{H}$ needs one more color than the no.~of files. By replacing $k$ of these colors by files, and then  converting an extra color to an appropriate linear combination of files, it is possible to get an admissible coding scheme on $\mathcal{G}_{k-1}$, as described next.

$\chi(\mathcal{H}) = (k+1)$ implies a valid $(k+1)$ vertex-coloring map: $\rho: \mathcal{N} \to S:= \{c_1, c_2, \dots, c_{k+1}\}$. Associate some $k$ of the $(k+1)$ colors directly with $k$ file indices , and mark the remaining color as \emph{coded}. That is, form a bijective function $f:S \to \{1,2,\dots,k,*\}$ where $*$ represents a coded color. 
% For example, to make $c_{k+1}$ as a coded color, assign $f(c_i) = i$ for $i \in [k]$ and $f(c_{k+1}) = *$.

\subsection{Encoding Algorithm at each node $i \in \mathcal{N}$} \label{sec:binary code}

If $i$ is mapped to an \emph{uncoded} color, i.e, $f(\rho(i)) \in [k]$, then the file assignment is $X_i = W_{f(\rho(i))}$.
On the other hand, if $i$ is mapped to a coded color $f(\rho(i))=*$, then:
\begin{itemize}[itemindent =*, leftmargin =*]
    \item Let $\mathcal{R}(i) = \{j\in \mathcal{N}: (i,j) \in \mathcal{G}_{k-1}\}$ be the set of nodes that are adjacent to node $i$ via outgoing edges from $i$ in $\mathcal{G}_{k-1}$ (called as \emph{receive} nodes).
    \item For each receive node $r \in \mathcal{R}(i) $, identify the index of the missing file $\mu(r)$ as follows. The node $r$ and all of its $(k-1)$ neighbors in $\mathcal{G}_{k-1}$ except node $i$ have uncoded colors. This is because these nodes are adjacent to $i$ in $\mathcal{H}$ and hence cannot share the same color as $i$.  Therefore, $r$ has access to some $(k-1)$ uncoded files from its $(k-2)$ uncoded neighbors and itself. Hence, there is precisely one file which is not available with $r$ that it wishes to get from $i$. Denote the index of this missing file as $\mu(r)$ 
    \item For node $i$, its $(k-1)$ neighbors in $\mathcal{G}_{k-1}$  have distinct, uncoded colors due to valid vertex coloring. Hence $i$ has a missing file which needs to be provided by itself. Denote this missing file index as $\mu(i)$  
    \item Assign the sum (bitwise-XOR) of missing files to node $i$ as:
\begin{equation}\label{eq_index_coding}
    X_i = \sum_{f \in S} W_{f} \text{  } ,  S:= \big \{\mu(r): r \in \{i\}\cup \mathcal{R}(i)\}\big \}      
\end{equation}
\end{itemize} 
  % (End of Else $i$ is mapped to a coded color).
By construction, the above code is admissible as every node has at most one missing uncoded file, which can be obtained from its coded neighbor. For clarity, a \emph{decoding} algorithm has been added in Appendix~\ref{app:binary code} of the extended paper~\cite{acharyaArxiv}. 

%\subsubsection{Decoding}
%The above encoding algorithm results in an admissible code on $\mathcal{G}_{k-1}$ since
%\bit[leftmargin =*]
%	\item At each node $r \in \mathcal{N}$, if a requested file is available directly from one of its $(k-1)$ neighbors, then we are done. 
%	\item Else, the requested file is precisely the missing file $\mu(r)$ (see algorithm above). 
%	\bit[leftmargin =*]
%		\item In this case, the node or exactly one of its neighbors will have a coded file. Denote this \emph{transmit} node by $t$. Its coded file is given by \eqref{eq_index_coding} as:
%		\begin{equation}
%		 X_t = W_{\mu(t)}+ \sum_{j \in \mathcal{R}(t)} W_{\mu(j)}
%		\end{equation} 
%		\item If $r =t$, i.e., $r$ is the coded node itself, then $r$ will first collect the files $\{W_{\mu(j)}: j \in \mathcal{R}(t)\}$ from its uncoded neighbors. It will then subtract/XOR these files out of $X_t$ to obtain the desired file $W_{\mu(t)}$ as
%		\begin{equation}
%		W_{\mu(t)} = X_{t} - \sum_{j \in \mathcal{R}(t)} W_{\mu(j)}
%		\end{equation}
%		\item If $r \ne t$, i.e, $r$ is not a coded node itself, then $r$ will collect $\{W_{\mu(j)}: j \in \mathcal{R}(t), j \ne r\}$ and also $W_{\mu(t)}$ from its uncoded neighbors. It will then subtract/XOR these files out of $X_t$ to obtain the desired file $W_{\mu(r)}$ as
%		\begin{equation}
%		W_{\mu(r)} = X_{t} - ( W_{\mu(t)} +\sum_{j \in \mathcal{R}(t), j\ne r} W_{\mu(j)})
%		\end{equation}
%	\eit	
%\eit

Using the above algorithm, we can get multiple admissible codes, one for each choice of vertex coloring on $\mathcal{H}$ (unique up to color permutation), and for each choice of coded color.  It is not known whether this family of codes contains a system average-latency optimal code on $\mathcal{G}_{k-1}$. Nevertheless, the codes are attractive from implementation perspective since they are worst-case latency optimal, binary-coded, and have just enough file additions to make the code admissible. 

\section{Application to a Data-Storage System}
\label{sec:app}
\noindent We illustrate the efficacy of Theorem~\ref{thm:vc}  on a sample geo-distributed data-center network of $6$ nodes as shown in Fig.~\ref{fig:awsdc}. The inter-node RTTs are taken from measurements as per Amazon AWS public cloud\cite{cadambe}\cite{aws}. Consider $k=4$ files.
\begin{figure}[t]
\centering
\begin{subfigure}[b]{\columnwidth}
\centering
\includegraphics[scale = 0.8]{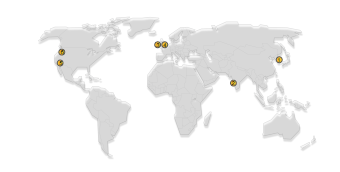}
\end{subfigure}
\\
\begin{subfigure}[b]{\columnwidth}
\centering
%\resizebox{\columnwidth}{!}{
\begin{tabular}{|l|cccccc|}
\hline
Regions & Seoul & Mumbai & Ireland & London & California & Oregon\\
\hline 
Seoul & 0 & 120 & 230 & 240 &138 & 126\\
%\hline 
Mumbai & 120 & 0 & 121 & 113 &228 & 220\\
%\hline 
Ireland & 230 & 121 & 0 & 13 &138 & 126\\
%\hline 
London & 240 & 113 & 13 & 0 &146 & 137\\
%\hline 
California & 138 & 138 & 230 & 146 &0 & 22\\
%\hline 
Oregon & 126 & 220 & 126 & 137 &22 & 0\\
\hline
\end{tabular}
%}
\end{subfigure}

\caption{A sample data store with $6$ nodes and their inter-node RTTs (in ms) measured as per AWS public cloud\cite{cadambe}\cite{aws}.}
\label{fig:awsdc}
\vspace{-1em}
\end{figure}

\noindent \textbf{Non-existence of Optimal Uncoded Scheme:} 
For $(n,k) = (6,4)$, there is a unique nearest-neighbor graph $\mathcal{G}_3$ as shown in Fig.~\ref{fig:awsk4}. 
\begin{figure}[t]
\centering
\begin{subfigure}[b]{0.45\columnwidth}
\centering
\includegraphics[scale = 0.3]{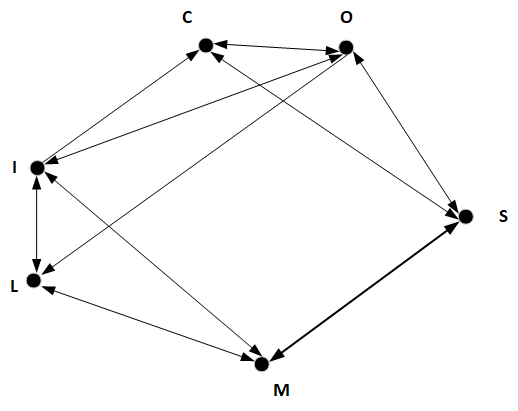}
\caption{(Unique) $\mathcal{G}_3$ }
\end{subfigure}
\begin{subfigure}[b]{0.45\columnwidth}
\centering
\includegraphics[scale = 0.3]{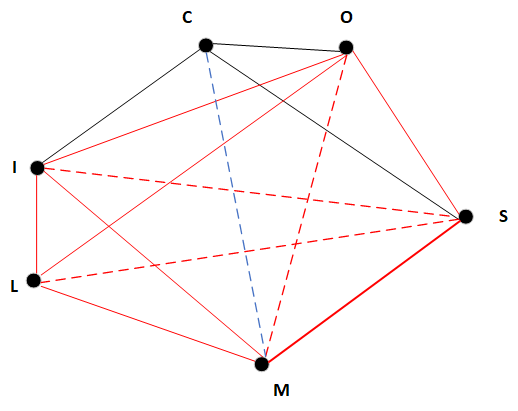}
\caption{The extended Graph $\mathcal{H}$. 
% The dashed edges indicate additional edges on top of $\mathcal{G}_3$
}
\end{subfigure}
\caption{AWS data-store (cities replaced by initials) for $k=4$. 
% $\mathcal{G}_3$ and $\mathcal{H}$.
	%$\mathcal{H}$ has a complete subgraph of 5 nodes $(S,M,I,L,O)$ and hence 4-coloring is not possible.
	}
\label{fig:awsk4}
%\vspace{-0.5em}
\end{figure}
The figure also shows the extended Graph $\mathcal{H}$. It can be seen that the induced sub-graph among nodes $S,L,I,M,O$ is a complete graph $\mathcal{K}_5$, and hence $k=4$ coloring is not possible. Thus, as per Theorem \ref{thm:vc} and Proposition \ref{prop: uncoded_bng}, no optimal uncoded scheme exists (in terms of both average and per-node worst-case latency). 

\noindent \textbf{Admissible Binary Codes:}
% Since no optimal uncoded schemes exist for $k=4$ case,we now focus on coded schemes. 
From Fig.~\ref{fig:awsk4}, we can see that $\mathcal{H}$ can be $5$- colored by assigning same color to $C$ and $L$. Since $\chi(\mathcal{H}) = (k+1)$, we can obtain admissible binary codes as described in Section \ref{sec:coded storage}.
% , where the idea is to mark one of the extra colors as a 'coded' color and replace this color with an appropriate linear combination of $k$ files.
\begin{table}
\caption{AWS data store with $(n,k) = (6,4)$: A Binary-code obtained using $5$-vertex coloring of $\mathcal{H}$. $^*$: coded color.}
\label{tab:aws_k_4}
\centering
%\resizebox{\columnwidth}{!}{
\begin{tabular}{||c|c|c|c|cccc||}
\hline
\hline
Node & Vertex &Code & Decoding &\multicolumn{4}{|c||}{File recovery}  \\
Index &  Color& &  &\multicolumn{4}{|c||}{latency (ms)}  \\
$i$ & $\rho(i)$&$X_i$ & &$W_1$ & $W_2$ & $W_3$ & $W_4$ \\
\hline 

(S)& $^*c_1$& $W_1+W_2+W_4$& $W_2 = X_S+X_M + X_O$& 120 & 126 & 138 & 126\\
%\hline 
(M)& $c_2$ &$W_1$ & $W_4 = X_S+X_M+X_I$&  0& 121 & 113 &121\\
%\hline 
(I) & $c_3$& $W_2$& &  121 & 0 & 13 & 126 \\
%\hline 
(L) & $c_4$ &$W_3$& &  113 & 13 & 0 & 137 \\
%\hline 
(C)& $c_4$  &$W_3$&$W_1 = X_S+X_I+X_O$ & 138 & 138 & 0 & 22\\
%\hline 
(O) & $c_5$&$W_4$& $W_1 = X_S+X_I+X_O$&  126 & 126 & 22 & 0 \\
\hline
\hline
\end{tabular}
%}
\vspace{-1em}
\end{table}

Let us mark $c_1$ as the coded color and associate rest of the colors directly with uncoded files as given in Table  \ref{tab:aws_k_4}. Now, $c_1$ is mapped to node $S$(Seoul) which is a neigbhor of  nodes $M$, $C$, and $O$ in $\mathcal{G}_3$ . The node $S$  has a missing file $W_2$ while node $M$ needs $W_4$ and nodes $C,O$ both need $W_1$ from node $S$. Hence the codeword on node $S$ as per \eqref{eq_index_coding} is $X_1 = W_1 \oplus W_2 \oplus W_4$. The decoding equations at each node and the corresponding latencies are also tabulated in Table \ref{tab:aws_k_4}. 
\begin{comment}
Similarly, we can obtain $4$ more binary codes by choosing one of the other $4$ colors as a coded color, but they turn out to have higher average latency that the code described here. The vertex coloring is unique (up to color permutation),and hence cannot provide more codes.
As can be seen in Table \ref{tab:aws_k_4}, coded node $S$ cannot have a $0$-latency file access for any of the files, but a $0$-latency file access is implicit in the the average-latency lower bound of \eqref{eq_avg_bound}. Hence, coded schemes cannot meet the average-latency lower bound.     
\end{comment}
The resulting average latency is $L_{avg}(\mathcal{C})  =81.67$ \textit{ms}, compared to the non-achievable lower bound of $76.37$ \textit{ms}. 
% The question of whether the binary code above is optimal in terms of average latency is open as of now. 

\section{Conclusion and Future Work}
\label{sec:conc}
\noindent We introduced the problem of latency optimal storage schemes on a geo-distributed data network having certain inter-node round-trip times. By modeling the storage network as a weighted complete graph, we showed that a latency optimal uncoded storage exists if and only if it is admissible on a subgraph called the nearest-neighbor graph. We then obtained vertex-coloring based condition for such an optimal uncoded scheme to exist.
% In a given storage network, if such an optimal uncoded storage exists, there is no additional benefit of looking at coded storage. 
In the networks where the vertex coloring condition fails, our result provides justification for employing coded storage. Finding optimal codes for such networks is an open problem and is the direction of our future work. 

% However, we have come up with simple admissible binary codes when the vertex coloring needs just one more color than the number of files. 
% Another research direction could be to relax the per-node worst-case latency optimality constraint and look for only average latency optimal codes, which would mean to look beyond codes admissible on $\mathcal{G}_{k-1}$.

\section*{Acknowledgment}
%This research is supported by SERB Grant No.~CRG/2021/008479 and NSF Grant~\#2211045.
\noindent We thank Dr.~Sridhar Ramesh, {MaxLinear Inc.} for helpful comments on this work. 

\clearpage
\bibliographystyle{IEEEtran}
\bibliography{LatencyOptimalUncodedStorageArXiv}

\clearpage
\appendix
\subsection{ Statement and Proof of Proposition \ref{prop:wc_bound}} \label{app:wc_bound}
For any admissible code $\mathcal{C}$ on $\mathcal{G}$, per-node worst-case latency at any node $i$ is lower-bounded as:
\[ L_{max}^{(i)}(\mathcal{C}) \ge \lambda_{(k-1)}^{(i)}
\]
Further, the average latency $L_{avg}(\mathcal{C})$ is lower-bounded as:  
\[
 L_{avg}(\mathcal{C}) \ge \frac{1}{kn}\sum_{i \in \mathcal{N}}\sum_{j\in [k]}\lambda_{j}^{(i)}
 \]
\begin{proof}

\begin{enumerate}
\item Worst-case latency bound:
For a given code $\mathcal{C}$, for a node $i$, let $r$ be the number of nodes (including $i$) with RTT  $ \le L_{max}^{(i)}(\mathcal{C})$. Now, since $\lambda_{(r-1)}^{(i)}$ is the $(r-1)^{\text{th}}$ least RTT value to node $i$, we must have $L_{max}^{(i)}(\mathcal{C})\ge \lambda_{(r-1)}^{(i)}$ . 
\\ 
Let $X_{i1}, X_{i2},\dots,X_{ir}$ be the data stored in these nodes. Since these are the only files are available for decoding, we should be able to retrieve all the $k$ information files $\{W_1,\dots W_k\}$ from these $r$ files. Thus we need $r\ge k$, which implies $ \lambda_{(r-1)}^{(i)} \ge \lambda_{(k-1)}^{(i)}$. 
\\
Combining these two inequalities, we get 
$L_{max}^{(i)}(\mathcal{C})\ge \lambda_{(k-1)}^{(i)}$. 
\item Average latency bound:
The average latency from \eqref{def_avg_lat} is defined as  
$L_{avg}(\mathcal{C}) = \frac{1}{kn}\sum_{i \in \mathcal{N}}\sum_{j=1}^{k}l_{j}^{(i)} $. \\
Consider any node $i$. For ease of notation, let the sorted sequence of latencies at node $i$ be also $\{l_1^{(i)},l_2^{(i)},\dots, l_k^{(i)}\}$. That is, for a given $m \in [k]$, $l_m^{(i)}$ is the latency incurred to obtain $m^{\text{th}}$ file.  Let $r$ be the number of nodes with RTT $\le l_m^{(i)}$ from $i$. Thus $l_m^{(i)} \ge \lambda_{(r-1)}^{(i)}$.  Since these $r$ nodes can help decode $m$ files, it is necessary that $r\ge m$.
Therefore,
$l_m^{(i)} \ge \lambda_{(r-1)}^{(i)} \ge \lambda_{(m-1)}^{(i)}$.
\\Averaging this inequality over $m \in [k]$ and nodes $i$, we get \\
\[
L_{avg}(\mathcal{C}):= \frac{1}{kn}\sum_{i \in \mathcal{N}}\sum_{m \in [k]}l_{m}^{(i)} \ge \frac{1}{kn}\sum_{i \in \mathcal{N}}\sum_{m \in [k]}\lambda_{(m-1)}^{(i)}
\] 
The assumption of sorted latencies can be removed since latencies across all nodes and all files are averaged in the \emph{LHS} of the equation above.
\end{enumerate}
\end{proof}
	
\subsection{Statement and Proof of Proposition \ref{prop: uncoded_bng}} \label{app:uncoded_bng}
If there is an admissible uncoded scheme $\mathcal{C}$ on $\mathcal{G}_{k-1}$, then it meets the average latency lower bound in  \eqref{eq_avg_bound}: 
\[
L_{avg}(\mathcal{C}) = \frac{1}{kn}\sum_{i \in \mathcal{N}}\sum_{j\in [k]}\lambda_{j}^{(i)}
\]
\begin{proof}
On $\mathcal{G}_{k-1}$, a node $i$ has exactly $(k-1)$ neighbors. Since the code is admissible, $k$ files should be decodable from the local file, and $k-1$ files obtained from the neighbors. As the scheme is uncoded, the $k$ information files have to be stored uncoded in $k-1$ neighbors and node $i$. This means that the uncoded files in node $i$ and its $k-1$ neighbors are distinct. Thus the latencies in fetching the $k$ files are precisely $\lambda_0^{(i)}, \dots , \lambda_{(k-1)}^{(i)}$. Substituting the latencies in \eqref{def_avg_lat}, we meet the lower bound.    
\end{proof}

\subsection{Statement and Proof of Theorem \ref{thm:vc}} \label{app:vc}
An admissible uncoded storage scheme exists on $\mathcal{G}_{k-1}$ if and only if the corresponding extended graph $\mathcal{H}$ has chromatic number $\chi(\mathcal{H})= k$, . 
\begin{proof}
Suppose $\chi(\mathcal{H}) = k$. Then a $k$-coloring scheme on $\mathcal{H}$ implies that in $\mathcal{G}_{k-1}$, each node and its $(k-1)$ neighbors have all distinct colors. Replacing $k$ colors with $k$ files gives an admissible uncoded scheme as each node has access to $k$ distinct files from itself and its  $(k-1)$ neighbors.

Conversely, if an admissible uncoded storage scheme exists on  $\mathcal{G}_{k-1}$, each node $i$ needs to get access to $k$ files either locally, or from its neighbors. Since there are only $(k-1)$ neighbors for each node in $\mathcal{G}_{k-1}$, the files in the node $i$ and its neighbor set should all be distinct. Thus, in extended graph $\mathcal{H}$, any 2 neighbors store different files since, in $\mathcal{G}_{k-1}$, they are either neighbors or belong to the neighbor set of a same node. The files therefore form a $k$-coloring on $\mathcal{H}$.
\end{proof}

\subsection{Statement and Proof of Corollary \ref{corr:k_2}} \label{app:k_2}
For any data storage system $\mathcal{G}$ with $k=2$, there always exists an uncoded scheme which is optimal both in per-node worst-case latency and average latency.
\begin{proof}
We first claim that there exists a nearest-neighbor graph $\mathcal{G}_1$ without any loops.\\
	Note that $\mathcal{G}_1$ is obtained by connecting each node to its least latency neighbor. Suppose there is a directed loop  of nodes $(v_1 \to v_2 \to, ...., v_r\to v_1)$, then the corresponding edge weights satisfy \\
	$\tau_{v_1, v_2} \le \tau_{v_2,v_3} \le \dots \le \tau_{v_{(r-1)}, v_r}\le \tau_{v_r, v_1}$\\
	This implies that $\tau_{v_1, v_2} \le \tau_{v_r, v_1}$.
	Since $v_r$ is the least-latency neighbor of $v_1$, it must be that  $\tau_{v_2, v_1} = \tau_{v_r, v_1}$. Thus all the edges in the loop have same weight. So we can break the loop by assigning $v_2$ as the nearest neighbor of $v_1$ instead of $v_r$. In this manner, all loops can be removed to obtain a loop-less $\mathcal{G}_1$.
	
	Now, for $k=2$, note that $\mathcal{G}_1$ and its extended graph $\mathcal{H}$ are the same, and so $\mathcal{H}$ is loop-free. Since a loop-free graph is a tree or a forest which can always be $2$-colored, $\mathcal{H}$ has $(k=2)$ coloring.  Hence, by Theorem \ref{thm:vc}, optimal uncoded scheme exists.
    
\end{proof}

\subsection{Decoding Algorithm for Coding Scheme in Section \ref{sec:binary code}} \label{app:binary code}
%The above encoding algorithm results in an admissible code on $\mathcal{G}_{k-1}$ since
\bit[leftmargin =*]
	\item At each node $r \in \mathcal{N}$, if the requested file is available directly from one of its $(k-1)$ neighbors, then we are done. 
	\item Else, the requested file is, by definition, the missing file $\mu(r)$ (see the encoding algorithm for details). 
	\bit[leftmargin =*]
		\item In this case, the node $r$ or exactly one of its neighbors will have a coded file. Denote this \emph{transmit} node by $t$. Its coded file is given by \eqref{eq_index_coding} as:
		\begin{equation}
		X_t = \sum_{f \in S} W_{f} \text{ where }  S= \big \{\mu(j): j \in \{t\}\cup \mathcal{R}(t)\}\big \} 
        \end{equation}
		
		\item  Node $r$ will first collect the files $\{W_{\mu(j)}: j \in \{{t} \cup \mathcal{R}(t)\}, j\ne r\}$ from its uncoded neighbors. It will then subtract/XOR these files out of $X_t$ to obtain the desired file $W_{\mu(r)}$ as
		\begin{equation}
			W_{\mu(t)} = X_{t} -  \sum_{f \in S, f\ne r}  W_{f} 
		\end{equation}
	\eit	
\eit

\subsection{Decoding Process of a Linear Storage Code} \label{recovery}
 Let $G$ be the generator matrix of a linear code $\mathcal{C}$ on the data storage network $\mathcal{G}$ of $n$ serves and $k$ files as described in Section \ref{sec:model}. Therefore, the coded file vector is given by \[\underbar{X}^T = \underbar{W}^T G \]For simplicity, consider no-subpacketization $\alpha = 1$ case.
A node $i \in \mathcal{N}$ uses certain linear combinations of the stored files to retrieve each information file. In other words, it applies a $n \times k $ matrix $R^{(i)}$ such that 
\[ 
\underbar{X}^TR^{(i)} = \underbar{W}^T \iff
GR^{(i)} = I_{k\times k}. 
\]
We shall call $R^{(i)}$ as the \emph{recovery matrix} for node $i$.

Note that $G$ should have rank $k$ in order to be able to recover all the information files. Therefore, there exists an $n\times k$ matrix $\Gamma$ such that $G\Gamma = I_{k\times k}$. Let $H$ be a $(n-k)\times n$ parity check matrix of the code $\mathcal{C}$.
Therefore $GH^T = I_{k \times (n-k)}$. 
\begin{prop}
Recovery matrix of any node $i$ is of the form
\[
R^{(i)} = \Gamma + H^TA^{(i)} \text{ for some } (n-k) \times k \text{ matrix } A^{(i)} 
\]

\end{prop}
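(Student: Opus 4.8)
The plan is to characterize the solution set of the matrix equation $G R = I_{k \times k}$ in the unknown $n \times k$ matrix $R$, using the two auxiliary objects already introduced: a particular right inverse $\Gamma$ of $G$ (which exists because $\mathrm{rank}(G) = k$) and the parity-check matrix $H$ of $\mathcal{C}$, which satisfies $G H^T = 0$. Note first that the statement as written contains a typo — $G H^T$ should be the $k \times (n-k)$ \emph{zero} matrix, not $I_{k\times(n-k)}$ — since $H$ is a parity-check matrix; I would use $G H^T = 0$ throughout. The claim is then that every recovery matrix $R^{(i)}$ has the form $\Gamma + H^T A^{(i)}$ for some $(n-k) \times k$ matrix $A^{(i)}$.

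The argument has two directions. First I would check that any matrix of the stated form is indeed a valid recovery matrix: for any $(n-k)\times k$ matrix $A$,
\[
G\big(\Gamma + H^T A\big) = G\Gamma + (G H^T) A = I_{k\times k} + 0 = I_{k\times k},
\]
so $\Gamma + H^T A$ solves $G R^{(i)} = I_{k\times k}$. Conversely, suppose $R^{(i)}$ is any recovery matrix, so $G R^{(i)} = I_{k\times k} = G\Gamma$. Then $G\big(R^{(i)} - \Gamma\big) = 0$, i.e. every column of $R^{(i)} - \Gamma$ lies in the (right) null space of $G$, which is an $(n-k)$-dimensional subspace of $\mathcal{F}^n$. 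The rows of $H$ — equivalently the columns of $H^T$ — form a basis of that null space (since $H$ is a full-rank $(n-k)\times n$ parity-check matrix, $G H^T = 0$, and dimension count gives that $\mathrm{rowspace}(H)$ is exactly $\ker G$). Hence each column of $R^{(i)} - \Gamma$ is a unique linear combination of the columns of $H^T$; collecting the coefficient vectors as the columns of an $(n-k)\times k$ matrix $A^{(i)}$ gives $R^{(i)} - \Gamma = H^T A^{(i)}$, which is the desired form.

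The main obstacle — really the only nontrivial point — is justifying that the columns of $H^T$ span (and are a basis of) the right null space of $G$, so that the decomposition of $R^{(i)} - \Gamma$ through $H^T$ actually exists. This is a standard rank/nullity fact for linear codes: $\ker G$ has dimension $n - \mathrm{rank}(G) = n-k$, the rows of $H$ lie in $\ker G$ because $G H^T = 0$, and $H$ has rank $n-k$, so the row space of $H$ exhausts $\ker G$. Once this is in place the rest is immediate linear algebra. Everything extends verbatim to the general sub-packetization case $\alpha > 1$ by replacing $k$ with $k\alpha$, $n$ with $n\alpha$, and $n-k$ with $(n-k)\alpha$; I would state the proof for $\alpha = 1$ and remark on this at the end.
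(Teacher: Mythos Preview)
Your proof is correct and follows essentially the same approach as the paper: compute $G(R^{(i)}-\Gamma)=0$ and conclude that each column of $R^{(i)}-\Gamma$ lies in the null space of $G$ (equivalently, the dual code), which is generated by the rows of $H$, yielding $R^{(i)}-\Gamma=H^T A^{(i)}$. You are also right that $G H^T = I_{k\times(n-k)}$ in the setup is a typo for the zero matrix; your version is more thorough in that it additionally verifies the forward direction and spells out the rank--nullity justification that the columns of $H^T$ span $\ker G$, details the paper leaves implicit.
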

\begin{proof}
We have $G(R^{(i)}- \Gamma) = I-I= 0$ which implies that the $k$ columns of $(R^{(i)}- \Gamma)$ belong to the dual code of $\mathcal{C}$. Since the dual code has generator matrix $H$, we have $(R^{(i)}- \Gamma) = H^TA^{(i)}$ for some $(n-k) \times k$ matrix $A^{(i)}$. 
    
\end{proof} 

Note that the row-indices of non-zero elements of a column $j$ in $R^{(i)}$ are precisely the indices of nodes whose contents are used by node $i$ to recover file $W_j$ . The latency in recovering a file $W_j$ at node $i$, denoted by $l_j^{(i)}$,is therefore (defined as) the maximum of the RTTs from these nodes to node $i$.

As seen from the equation above, column $j$ of $A^{(i)}$ determines which of the elements in the corresponding column of the recovery matrix $R^{(i)}$ are non -zero, and hence the latency $l_j^{(i)}$. The optimal decoding process for a given code $\mathcal{C}$ employs optimal choice of each column $j$ of $A^{(i)}$ for every node $i$, so as to minimize $\{l_j^{(i)}, i \in \mathcal{N}, j \in [k]\}$.

\end{document}